\newtheorem*{theorem}{Theorem}
\newtheorem{proposition}{Proposition}
\newtheorem{corollary}{Corollary}
\begin{document}

\title{Approximation of Reachable Set for Coherently Controlled Open Quantum Systems: Application to Quantum State Engineering}

\author{Jun Li}
\affiliation{Hefei National Laboratory for Physical Sciences at Microscale and Department of Modern Physics, University of Science and Technology of China, Hefei, Anhui 230026, China}

\author{Dawei Lu}
\affiliation{Institute for Quantum Computing and Department of Physics and Astronomy, University of Waterloo, Waterloo, Ontario N2L 3G1, Canada}

\author{Zhihuang Luo}
\affiliation{Hefei National Laboratory for Physical Sciences at Microscale and Department of Modern Physics, University of Science and Technology of China, Hefei, Anhui 230026, China}

\author{Raymond Laflamme}
\affiliation{Institute for Quantum Computing and Department of Physics and Astronomy, University of Waterloo, Waterloo, Ontario N2L 3G1, Canada}
\affiliation{Perimeter Institute for Theoretical Physics, Waterloo, Ontario N2L 2Y5, Canada}
\affiliation{Canadian Institute for Advanced Research, Toronto, Ontario M5G 1Z8, Canada}

\author{Xinhua Peng}
\email{xhpeng@ustc.edu.cn}
\affiliation{Hefei National Laboratory for Physical Sciences at Microscale and Department of Modern Physics, University of Science and Technology of China, Hefei, Anhui 230026, China}
\affiliation{Synergetic Innovation Center of Quantum Information $\&$ Quantum Physics,
University of Science and Technology of China, Hefei, Anhui 230026, China}

\author{Jiangfeng Du}
\email{djf@ustc.edu.cn}
\affiliation{Hefei National Laboratory for Physical Sciences at Microscale and Department of Modern Physics, University of Science and Technology of China, Hefei, Anhui 230026,  China}
\affiliation{Synergetic Innovation Center of Quantum Information $\&$ Quantum Physics,
University of Science and Technology of China, Hefei, Anhui 230026, China}

\begin{abstract}
Precisely characterizing and controlling realistic open quantum systems is one of the most challenging
and exciting frontiers in quantum sciences and technologies.
In this Letter, we present methods of approximately computing reachable sets for coherently controlled dissipative systems, which is very useful for assessing control performances. We apply this to a two-qubit nuclear magnetic resonance spin system and implement some tasks of quantum control in open systems at a near optimal performance in view of purity: e.g., increasing polarization and preparing pseudo-pure states. Our work shows interesting and promising applications of environment-assisted quantum dynamics.
\end{abstract}

\pacs{03.67.Lx,76.60.-k,03.65.Yz}

\maketitle

Recent years have seen immense advances in active and precise manipulation of a broad variety of quantum systems. The subject of quantum system control has been developed into a rapidly growing area \cite{DP} attracting substantial interests from the community of quantum information physicists. One of the fundamental tasks is to design reliable control techniques for systems that are exposed to a dissipative environment \cite{KKSS}.
As dissipation tends to irreversibly affect the system dynamics, it is recognized as one dominant source for information loss and hence must be suppressed.
Only recently was it realized that open system engineering may exhibit surprising advantages in some important aspects \cite{RMBL, Entanglement, SW}. For example, it was shown that the purification efficiency of heat-bath algorithmic cooling protocol can surpass the closed system limit \cite{RMBL}. In other researches \cite{Entanglement}, there emerged great interests in environment-assisted entangled  state engineering.
Dissipative production of entangled steady states has already been realized
in various experimental setups like trapped ions \cite{Lin}, superconducting circuit \cite{RTJS} and double quantum dot \cite{SKVCG}.

Although some ideas borrowed from classical control theory (e.g., time optimal control) have been successfully extended to construct methods for steering closed quantum systems \cite{D08}, it turns out to be more challenging  for open quantum systems.
The major reason  comes from the fact proved in \cite{A} that for a finite dimensional Markovian quantum system, coherent means of control cannot fully compensate the irreversibility of the dynamics. In fact, to what extent can the system evolving tendency be changed depends upon not only the external operations but also the structure of the relaxation mechanisms. This certainly  forms an obstacle in devising of general control methodology.
Previous research results have been able to characterize the reachable set on the states of a single qubit both qualitatively \cite{A} and quantitatively \cite{Y, RBR}. However, to generalize these results to higher dimensional systems is not easy \cite{R}.

Realising the lack of exact theory for the reachability problem, we propose to use approximation techniques. Basically the idea is to approximate the reachable set, usually in terms of simple geometric objects  (e.g., convex polytopes \cite{M14}, ellipsoids \cite{KV}), from both externally (over-approximation) and internally (under-approximation) \cite{ABDM}. Although various strategies have been put forward, computing reachable set in general remains a challenging task \cite{M14}. This is true especially for nonlinear systems, and quantum control models are indeed recognized as nonlinear \cite{PY}.
In this Letter, we derive reasonable approximations of reachable set in coherently controlled Markovian quantum systems. To this end, we first study the upper bound of system purity function, thus giving an over-approximation that the system can not surpass; and then analyze the small time local controllability, which results in an under-approximation. Moreover, our ideas are implemented experimentally using techniques of nuclear magnetic resonance (NMR).

\emph{Problem setting}---Consider a controlled $n$-qubit open system governed by the Lindblad equation \cite{Lindblad,BP}
\begin{equation}
\label{Lindblad}
\dot \rho   =  - i[H_S + H_C(t),\rho ] +  \mathcal{R}\rho ,
\end{equation}
where $H_S$ is the system Hamiltonian, $H_C(t)$ is the time-dependent external control Hamiltonian, and $\mathcal{R}$ is the relaxation superoperator of Lindblad type.
For simplicity we make two assumptions: (i) relaxation rates are comparatively slow so that arbitrary unitary operation can be implemented before relaxation effects become important; (ii) system's free relaxation process leads to a strictly contractive channel $\mathcal{E}_0$, namely the trace distance of any pair of different states is time decreasing. The latter assumption implies that, there is a unique relaxation-free state $\rho_{eq}$ satisfying: $\mathcal{R} \rho_{eq} = 0$ \cite{NC}. Our assumptions are valid in many practical physical systems that are weakly interacted with a heat bath \cite{BP}, e.g., atoms in a quantized radiation field and spin-lattice systems.

Now introduce the \emph{vector of coherence representation} \cite{SW, K}. Let $\mathcal{B} = \left\{ B_k \right\}_{k=0}^{4^n-1} = \left\{I,X,Y,Z\right\}^{\otimes n}$, where $I$ is the identity, and $X$, $Y$, $Z$ are Pauli operators. It constitutes an orthonormal basis of the state space in that the  orthonormal relation holds: $\operatorname{Tr}\left( B_k B_j \right)/2^n = \delta_{kj}$ for $k, j=0, ..., 4^n-1$.
Consequently $\rho$ can be expressed as: $\rho = {I^{ \otimes n}}/{2^n} + \sum\nolimits_{k = 1}^{{4^n-1}} {{\bm{r}_k}{B_k}} $ (${\bm{r}_k} = \text{Tr}\left( {\rho {B_k}} \right)/2^n$).
The Lindblad Eq. (\ref{Lindblad}) is then turned into a real $4^n -1$ dimensional nonhomogeneous vector differential equation
\begin{equation}
\label{Bloch}
\bm{\dot  r} = \mathbf{H} \bm{r} -  \mathbf{R} ( \bm{r} - {\bm{r}_{eq}}),
\end{equation}
in which $\mathbf{H}$, $\mathbf{R}$  and $\bm{r}_{eq}$ are $4^n-1$ dimensional with their entries determined by ${\mathbf{H}_{kj}}  = \text{Tr} \left( {-i{B_k}\left[H_S+H_C, {{B_j}} \right]} \right)/2^n$, ${\mathbf{R}_{kj}} = \text{Tr}\left(- {{B_k}\mathcal{R}{B_j}} \right)/2^n$ and $\bm{r}_{eq,k}  = \sum\nolimits_j {{\mathbf{R}}_{kj}^{ - 1}{\text{Tr}}\left( {{B_j}{\mathcal R}{I^{ \otimes n}}} \right)/{4^n}} $ respectively. It can be verified that $\mathbf{H}$ is antisymmetric and $\mathbf{R}$ (relaxation matrix) is symmetric positive definite \cite{S}.

To further simplify the problem, we use the diagonalization procedure to project the system dynamics into the diagonal subspace spanned by $\left\{I, Z\right\}^{\otimes n}$ \cite{Y, RBR}. Note that the vector of eigenvalues is essentially $2^n-1$ dimensional. Let $\rho = U \Lambda U^\dag$, where $\Lambda$ is diagonal and $U$ is a unitary operation in $SU(2^n)$.  In the vector of coherence representation the diagonalization procedure can be written as $\bm{r} = \mathbf{U} \bm{x}$, where $\bm{x}$ and $\mathbf{U}$ are the representations of $\Lambda$  and $U$ with respect to basis $\mathcal{B}$ respectively.
Consequently, any evolution of the system can be projected into a continuous trajectory in the diagonal subspace. Substituting the diagonalization procedure into Eq. (\ref{Bloch}), we obtain a $2^n -1$ dimensional dynamical equation \cite{Y}
\begin{equation}
\label{Projection}
\bm{\dot x} = - \left[ {{\mathbf{U}^T}\mathbf{R}\mathbf{U}}\right]_{\mathbf{d}} \bm{x} + \left[{\mathbf{U}^T} \mathbf{R} {\bm{r}_{eq}}\right]_{\mathbf{d}},
\end{equation}
in which the notation $[\cdot]_\mathbf{d}$ denotes the diagonal subspace part of its argument.
Provided that any unitary operation can be performed sufficiently fast compared with the relaxation timescale, we have that: (i) if a diagonal state can be reached, then any state on its unitary orbit can also be  generated; (ii) according to Eq. (\ref{Projection}), the system evolving direction at state $\bm{x}$ can be adjusted to any element of the set $\left\{ \bm{\dot x}_{\mathbf{U}} \left| { \mathbf{U} \in SU(2^n)} \right. \right\}$. Thus it suffices to study the projected dynamics, and we can view $SU(2^n)$ as the admissible control set in place of $H_C(t)$. Let $\operatorname{Reach}_{SU(2^n)}(\rho_{eq}, T)$ $(T \ge 0)$ denote the reachable diagonal states from the equilibrium state $\rho_{eq}$ under the control set $SU(2^n)$ during time $[0,T]$, the global reachable set is defined to be $\bigcup\nolimits_{T \ge 0} {\operatorname{Reach}_{SU(2^n)}(\rho_{eq}, T)}$.

Our goal is thus to construct both over-approximation  and under-approximation of the reachable region of diagonal states. Clearly the problem here extends the concept of \emph{universal bound on spin dynamics} \cite{S90}, i.e., bounds on the regions of operators in Liouville space being interconvertible by unitary transformations, to the open system control regime.
More precisely, let $\rho$ and $\sigma$ be two diagonal states, define the projection with respect to $\sigma$
\begin{equation}
\kappa_U  = \operatorname{Tr}(U \rho U^\dag \cdot \sigma)/\operatorname{Tr}(\sigma^2),
\end{equation}
where $U \in SU(2^n)$, and
\begin{equation}
\kappa_{\mathcal{E}} = \operatorname{Tr}(\mathcal{E}\rho \cdot \sigma)/\operatorname{Tr}(\sigma^2),
\end{equation}
where $\mathcal{E}$ is the non-unitary channel given by Eq. (\ref{Lindblad}) and satisfies $\mathcal{E} \rho = \kappa_{\mathcal{E}} \sigma$. Note that here in the latter case we don't allow the existence of unwanted components.
When $\sigma$ is the target operator, we can interpret $\kappa_U$ and $\kappa_{\mathcal{E}}$ as the polarization transfer efficiency from $\rho$ to $\sigma$. The universal bound gives an analytic expression bounding $\kappa_U$, which bears nice geometric meaning: the unitarily convertible region is bounded by a convex polytope  whose vertexes are composed of all of the diagonal permutations of $\rho$. To move forward a step, we here study the extended problem of determining achievable regions for $\kappa_{\mathcal{E}} $.

\emph{Over-approximation}---
In over-approximating the reachable set, one identifies  regions that the system can never reach. Our approach is to explore the dynamical behaviours of system purity function. Purity, quantifying the incoherent impacts from the environment, is particularly suited for studying how relaxation  imposes restrictions on system evolution. For example, one basic result for \emph{unital} systems, where $\rho_{eq}$ is the maximally mixed state, states that the purity function must be monotonically decreasing with time regardless of the controls \cite{LSA}. For the case of non-unital dynamics the situation is more complicated since purification can occur \cite{A}. However, in practical situations,  purification can not proceed unlimitedly, hence it is natural to seek for an upper bound.

Recall that purity is defined as $p=\text{Tr}\rho^2 = 1/2^n + 2^n {\bm{r}^T} \bm{r}$, thus its first time derivative is given by
\begin{equation}
\dot p =  - 2^{n+1}{\bm{r}^T} \mathbf{R} (\bm{r} - {\bm{r}_{eq}}).
\end{equation}
The set of states satisfying $\dot p = 0$ determines an ellipsoid in $\mathbb{R}^{4^n-1}$, which depends only upon $\mathbf{R}$ and the equilibrium state. From positive definiteness of $\mathbf{R}$ we know that for any state $\bm{r}$ outside of the ellipsoid there must be $\dot p (\bm{r}) < 0$. Let $S$ denote the smallest sphere enclosing the ellipsoid, it is obvious that: (i) the state $\bm{r}_{eq}$ is located inside $S$ and (ii) the evolution direction of any state on $S$ is towards the inner side of $S$. Thus starting at $\bm{r}_{eq}$, the system can not be driven outside $S$ by coherent means. One can then envisage a simple method to get an upper bound of $p$ by solving the following optimization problem
\begin{numcases}{}
\label{Optimization}
\max & $p (\bm{r})= 1/2^n + 2^n{\bm{r}^T} \bm{r}$,  \nonumber \\
\text{s.t.} & $\dot p (\bm{r}) =  - 2^{n+1}{\bm{r}^T} \mathbf{R} (\bm{r} - {\bm{r}_{eq}}) =0.$  \nonumber
\end{numcases}
This problem can be seen as an instance of \emph{quadratic programming over an ellipsoid constraint}, which is easy in the sense of computational complexity and can be solved with well-developed algorithms \cite{FP}.

\emph{Under-approximation}---Under-approximation involves some simplifications of the problem, which we state as such: (i) we restrict our considerations to the discrete set of controls $\mathcal{Q} \subset SU(2^n)$, where $\mathcal{Q}$ is the collection of $2^n!$ permutation operations on diagonal elements of the density matrix; (ii) we will find the small-time local controllable (STLC) set of states rather than analyzing global controllability. The system is said to be \emph{small-time local controllable} at point $\bm{x}$ if $\bm{x}$ belongs to the interior of the reachable set $\operatorname{Reach}_{SU(2^n)}(\bm{x}, T)$ for all $T > 0$.  In other words, for STLC at a point we need to be able to generate small motions in any direction of the full space $\mathbb{R}^{2^n-1}$ at that point. We now use $\Omega_{\mathcal{Q}}$ to denote the STLC set under the discrete control set $\mathcal{Q}$.

The problem of analytically constructing $\Omega_{\mathcal{Q}}$ was studied in full length in Ref. \cite{R}, with the conclusion that $\Omega_{\mathcal{Q}}$ is open, compact and connected, and its boundary is composed of a number of hypersurfaces.
Knowing about the connectedness of $\Omega_{\mathcal{Q}}$, along with the definition of STLC, we can take $\Omega_{\mathcal{Q}}$ as an under-approximation
\begin{equation}
\Omega_{\mathcal{Q}} \subset \bigcup\nolimits_{T \ge 0} {\operatorname{Reach}_{SU(2^n)}(\rho_{eq}, T)} .
\end{equation}
An algorithmic procedure of calculating $\Omega_{\mathcal{Q}}$ is presented in Supplemental Material \cite{S}.

\emph{Applications on two-qubit system}.---
We use the $^{13}$C-labeled chloroform dissolved in $d_6$-acetone as a two-qubit system to test the applicability of our reachability analysis. Our experiments were carried out on a Bruker Avance \uppercase\expandafter{\romannumeral3} 400 MHz ($B_0$ = 9.4 T) spectrometer at room temperature. The natural Hamiltonian at the basis $\mathcal{B}_2 = \left\{I,X,Y,Z\right\}^{\otimes 2}$ reads: $H_S = \pi ( - \gamma_\text{C} B_0 ZI -  \gamma_\text{H} B_0 IZ+ J/2 ZZ)$,
where $\gamma_\text{C}$ and $\gamma_\text{H}$ are the gyromagnetic ratios of nucleus $^{13}$C and $^{1}$H respectively, and $J = 214.5$Hz is the scalar coupling constant.
The equilibrium state is of the form: $\rho_{eq} \approx II/4 + \epsilon_{\text{C}} ZI + \epsilon_{\text{H}} IZ$ with $\epsilon_{\text{C}} \approx  \epsilon_{\text{H}}/4 \equiv \epsilon  \sim 10 ^{-5}$.
In the double rotating frame, we measured all the effective relaxation rates and thus obtain the system relaxation matrix $\mathbf{R}$ \cite{S}.

\begin{figure*}
\centering
\includegraphics[width=0.95\linewidth]{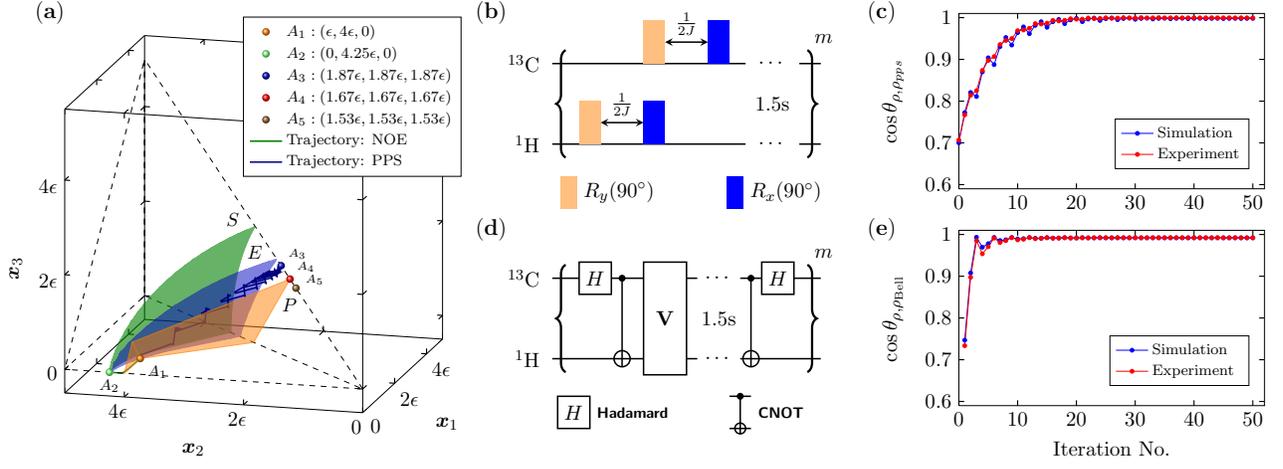}
\caption{(a) Illustrations of the results (in the region $0 \le \bm{x}_3 \le \bm{x}_1 \le \bm{x}_2$) on our chloroform system, including: (i) sphere (green) $S: \bm{x}^T \bm{x} = 18.06 \epsilon^2$; (ii) surface (blue) $E$: boundary for STLC region under control set $\mathcal{Q}$; (iii) faces (orange) $P$ representing the unitary universal bound; (iv) projected trajectories (simulation) for PPS preparation and NOE experiments. (b)-(c) The resulting PPS $\rho_{pps}$, and (d)-(e) pseudo-Bell state $\rho_{Bell}$ under periodic controls, along with numerical simulations with the Lindblad equation Eq. (\ref{Bloch}). Here $\theta_{\rho,\rho_{pps}}$ and $\theta_{\rho,\rho_{Bell}}$ denotes the angle between the direction of $\rho$ and that of the desired states. As long as the excitation is on, the system will be preserved periodically at the desired state direction.
}
\label{Figure}
\end{figure*}

In order to visualize the system evolution, we project the 15-dimensional relaxation dynamics into a 3-dimensional differential equation according to Eq. (\ref{Projection}).
For simplicity, we focus on the region $0 \le \bm{x}_3 \le \bm{x}_1 \le \bm{x}_2$ in which our control experiments are performed. In the region, we derived the sphere $S$ representing the upper bound of system purity and the surface $E$ representing the boundary for the system STLC set under the discrete control set $\mathcal{Q}$ based on the measured relaxation matrix (see Fig. \ref{Figure}(a)). The boundary for the exact reachable set should thus be in between $S$ and $E$. We also plotted the faces of the polygon, denoted by $P$, representing the universal bound on spin dynamics under unitary control. It can be seen that, in this system, the exact reachable set exceeds the unitary universal bound in almost every direction, clearly demonstrating the possibility of  larger reachable region of states when relaxation is present. Now we will study how the obtained results help us gain insights into the open system control methods.

Our first concern is the intersection between $S$ and the $\bm{x}_2$ axis: $(0, 4.27\epsilon, 0)$. We can use the nuclear Overhauser effect (NOE) to approach this state. It is well-known that \cite{L}, for a heteronuclear two-spin system,  applying a field at the resonance frequency of one spin for a sufficiently long time, will saturate its polarization and at the meantime affect or even enhance the magnetization of the other spin. In the experiment, an irradiation with $1000$Hz of magnitude and $10$s of duration is applied to the carbon channel, which drives the system into a steady state measured as $\bm{x}_{ss} \approx (0, 4.25\epsilon, 0)$. The $^1$H polarization is enhanced and is fairly close to the upper bound (point $A_2$ in Fig. \ref{Figure}(a)), in contrast to the unitary bound.

Our purity bound analysis thus leads  to a new view of the NOE experiment. The polarization transfer efficiency in NOE experiment is nearly optimal, which shows the advantages of environment-assisted quantum control.   Moreover, it extends the results of algorithmic cooling schemes. According to the purification limits derived in \cite{RMBL}, it is impossible to cool the proton in our system through the ``compression and refresh" iterative procedure. This is due to the different underlying relaxation model assumed. In heat-bath algorithmic cooling scheme, it is considered that each qubit is undergoing their respective $T_1$ and $T_2$ processes. But in NOE, cross-relaxation mechanisms are essential for the $^1$H purification \cite{L}. Thus NOE provides clear evidence of approaching even larger purification efficiency if more general relaxation mechanisms are taken into account.

Next we turn to the application of open system coherent control to state engineering  in NMR quantum computation.
We consider creating pseudopure state (PPS) \cite{CPH, GC, KCL} from the equilibrium state, which is  an often used initialization step for subsequent computation. The task can not be done merely with unitary operations. Previous methods of PPS preparation involve different ways of realizing non-unitary operations \cite{PPS} such as exertion of gradient fields. Here, we put forward a new approach: to let the inherent system relaxation effects take the role of non-unitary resources and design a periodic sequence so that PPS is the fixed point of the dynamics. Although the current experiment is performed on two-qubit system as an example, the idea can apply to general cases.

For chloroform, PPS takes the form: $\rho_{pps} = II/4 + \eta/4 (ZI + IZ + ZZ)$, in which $\eta$ is the \emph{effective purity}. The feature that its three coefficients are equal to each other specifies the \emph{PPS direction}, namely $\bm{x}_1 = \bm{x}_2 =\bm{x}_3$. Therefore, it is straightforward to conceive a simple ``coefficient-averaging process". The averaging process is governed by $\left[ \tau - \mathbf{V} \right]_m$, where $m$ is the iteration number, $\tau$ represents a period of free relaxation, and $\mathbf{V}$ is a cyclic permutation of the coordinates of $\bm{x}$. Let $\mathcal{E}_\tau$ and $\mathcal{E}_\mathbf{V}$ denote the dynamic map associated with the $\tau$ relaxation evolution and $\mathbf{V}$ operation respectively. Provided that $\tau$ is small enough, namely $\mathcal{E}_\tau$ is close to the identity, then the fixed point of $\mathcal{E}_\mathbf{V} \circ \mathcal{E}_{\tau}$ is close to that of $\mathcal{E}_\mathbf{V}$.
In the experiments, we chose $\mathbf{V}:  (\bm{x}_1,\bm{x}_2,\bm{x}_3)^T \to  (\bm{x}_2,\bm{x}_3,\bm{x}_1)^T$, which is implemented through a simple sequence
shown in Fig. \ref{Figure}(b). It was found that for a wide range of $\tau$ (less than $\sim 2$s) the system was able to be driven to some states close to the PPS direction. Within tolerable range of error, we set $\tau = 1.5$s, giving the  maximal effective purity ($\eta \approx 7.48 \epsilon$, point $A_3$ in Fig. \ref{Figure}(a)) of PPS obtained on trials. This can be compared to conventional spatial averaging method where $\eta \approx 6.12 \epsilon $ (point $A_5$ in Fig. \ref{Figure}(a)) \cite{P}  and line-selective pulse approach where unitary bound can be achieved $\eta \approx 6.67 \epsilon $ (point $A_4$ in Fig. \ref{Figure}(a)) \cite{P01}. Fig. \ref{Figure}(a) also reveals a gap between the point of maximally reachable $\eta$ (in between over-approximation and under-approximation) and the obtained PPS.
The  gap can be attributed to several reasons: (i) the imprecision in the experimental estimation of the relaxation matrix $\mathbf{R}$; (ii) the assumption of ignoring relaxation effects during the operation $\mathbf{V}$ is not perfectly satisfied in practice; (iii) the over-approximation may be not sufficiently tight and there is also the possibility that better preparation method exists.

Our periodic control method applies to any state that is unitarily equivalent to a PPS, e.g., a pseudo-Bell state $\rho_{Bell} = (1-\eta)/4 II + \eta/2 \left( {\left| {{\rm{00}}} \right\rangle {\rm{ + }}\left| {{\rm{11}}} \right\rangle } \right) \otimes \left( {\left\langle {{\rm{00}}} \right|{\rm{ + }}\left\langle {{\rm{11}}} \right|} \right)$. The trick is that, we just modify the PPS preparation periodic sequence to be $\left[ \mathbf{W} - \tau - \mathbf{V} - \mathbf{W}^T \right]_m$ (Fig. \ref{Figure}(d)), so that $\rho_{Bell}$ now becomes the fixed point of the sequence. Here $\mathbf{W}$ transforms $\rho_{pps}$ to $\rho_{Bell}$ and can be implemented through a Hadamard gate and a CNOT gate.
The experimental results shown in Fig. \ref{Figure}(e) demonstrates that the proposed control method can be implemented and gives results in excellent agreement with predictions.

The theory presented here helps assessing open system control schemes where purity is an important metric.
In addition, our  approximation can guide the development of numerical pulse searching algorithms. Our work can be improved by increasing the efficiency of computing the approximation, with the aid of advanced algorithmic techniques from computational geometry. We further studied in detail the NOE effect and state engineering experiments in the open system framework, and showed that relaxation effects are useful for implementing some nontrivial non-unitary control tasks. The lack of full controllability in certain important control regimes \cite{XYS, L} usually calls for a bound analysis for system reachable states. Our present study can thus be regarded as a part of explorations in this direction.  Future work will concentrate on incorporating our work here with other open system control models \cite{DP}, such as reservoir engineering in which incoherent resources are introduced to enhance the capability of controlling quantum systems.

\section{Acknowledgments}
This work is supported by the National Key Basic Research Program of China (Grant No. 2013CB921800 and No. 2014CB848700), the National
Science Fund for Distinguished Young Scholars Grant No. 11425523, National Natural Science Foundation of China under Grant Nos. 11375167, 11227901, 91021005,
the Chinese Academy of Sciences, the Strategic Priority Research Program (B) of the CAS (Grant No. XDB01030400), and Research Fund for the Doctoral Program of Higher Education of China under Grant No. 20113402110044.

\newpage
\onecolumngrid
\appendix

\section{Lindblad Equation in Vector of Coherence Representation}
Here we rewrite Lindblad equation in the vector of coherence representation, basically following the discussions in \cite{A,K,SW}. Introducing the orthonormal basis $\mathcal{B} = \left\{ B_k \right\}_{k=0}^{4^n-1} = \left\{I,X,Y,Z\right\}^{\otimes n}$, then there is
\begin{equation}
\rho = {I^{ \otimes n}}/{2^n} + \sum\limits_{k = 1}^{{4^n-1}} {{\bm{r}_k}{B_k}},  \quad {\bm{r}_k} = \text{Tr}\left( {\rho {B_k}} \right)/2^n.
\end{equation}
Substituting the above expression into the Lindblad Eq. (\textcolor[rgb]{0.00,0.00,1.00}{1}) yields
\begin{align}
{\bm{\dot r}_k}
& = \text{Tr}\left( {{B_k}\left( { - i\left[ {H,\rho } \right] + \mathcal{R}\rho } \right)} \right)/2^n \nonumber \\
& = \text{Tr}\left( { - i{B_k}\left[ {H,\sum\limits_{j = 1}^{{4^n} - 1} {{\bm{r}_j}{B_j}} } \right]} \right)/2^n + \text{Tr}\left( {{B_k}\mathcal{R}\left( {\frac{{{I^{ \otimes n}}}}{{{2^n}}} + \sum\limits_{j = 1}^{{4^n} - 1} {{\bm{r}_j}{B_j}} } \right)} \right)/2^n \nonumber \\
& = \sum\limits_{j = 1}^{{4^n} - 1} {\text{Tr}\left( { - i{B_k}\left[ {H,{B_j}} \right]} \right)/2^n{\bm{r}_j}}  + \sum\limits_{j = 1}^{{4^n} - 1} {\text{Tr}\left( {{B_k}\mathcal{R}{B_j}} \right)/2^n{\bm{r}_j}}  + \text{Tr}\left( {{B_k}\mathcal{R}\left( {\frac{{{I^{ \otimes n}}}}{{{2^n}}}} \right)} \right)/2^n.
\label{1}
\end{align}
Define
\begin{align}
\mathbf{H}: \quad & {\mathbf{H}_{kj}} = \text{Tr}\left( { - i{B_k}\left[ {H,{B_j}} \right]} \right)/2^n, \\
\mathbf{R}: \quad & {\mathbf{R}_{kj}} = \text{Tr}\left( - {{B_k}\mathcal{R}{B_j}} \right)/2^n,  \\
\bm{v}: \quad & \bm{v}_k = \text{Tr}\left( {{B_k}\mathcal{R}{I^{ \otimes n}}} \right)/{4^n}.
\end{align}
Then we get
\begin{equation}
\bm{\dot  r} = \mathbf{H} \bm{r} -  \mathbf{R} \bm{r} + \bm{v}.
\end{equation}

\begin{proposition}
$\mathbf{R}$ is real, symmetric. If the pure relaxation process of the system is strictly contractive, then $\mathbf{R}$ is positive definite.
\end{proposition}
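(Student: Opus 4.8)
The plan is to prove the three assertions in order of increasing depth, using only the definition $\mathbf{R}_{kj} = \operatorname{Tr}(-B_k\mathcal{R}B_j)/2^n$ together with the Lindblad structure of $\mathcal{R}$. \emph{Realness} is routine: every $B_k \in \{I,X,Y,Z\}^{\otimes n}$ is Hermitian, and a Lindblad generator is Hermiticity preserving, so $\mathcal{R}B_j$ is Hermitian. Since the trace of a product of two Hermitian operators is real (its conjugate is $\operatorname{Tr}((B_k\mathcal{R}B_j)^\dagger) = \operatorname{Tr}((\mathcal{R}B_j)B_k) = \operatorname{Tr}(B_k\mathcal{R}B_j)$ by cyclicity), each $\mathbf{R}_{kj}$ is real.

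\emph{Symmetry} is equivalent to self-adjointness of $\mathcal{R}$ in the Hilbert--Schmidt inner product, i.e. $\operatorname{Tr}(B_k\mathcal{R}B_j) = \operatorname{Tr}(B_j\mathcal{R}B_k)$. Writing $\mathcal{R}\rho = \sum_{\alpha\beta} c_{\alpha\beta}(F_\alpha\rho F_\beta^\dagger - \tfrac12\{F_\beta^\dagger F_\alpha,\rho\})$ with $c$ Hermitian, I would split the generator into its anticommutator (damping) part and its recycling part. The damping part $\rho \mapsto -\tfrac12\{M,\rho\}$ with $M = \sum_{\alpha\beta}c_{\alpha\beta}F_\beta^\dagger F_\alpha$ Hermitian is automatically self-adjoint, since $\operatorname{Tr}(A\{M,B\}) = \operatorname{Tr}(B\{M,A\})$ by cyclicity. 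The recycling part is self-adjoint precisely when the relaxation obeys the detailed-balance (symmetric cross-relaxation) structure of the Redfield/Solomon description relevant to the spin systems considered here, which I would read off from the explicit form of $\mathcal{R}$. This structural input is genuinely needed: self-adjointness of the recycling part is \emph{not} automatic for an arbitrary Lindblad generator, so symmetry is a property of the relaxation model rather than a formal consequence of the Lindblad form.

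\emph{Positive definiteness} is the substantive assertion, and the place where strict contractivity enters. With $H=0$ the pure relaxation reads $\dot{\bm{r}} = -\mathbf{R}(\bm{r}-\bm{r}_{eq})$, so the difference $\delta\bm{r} = \bm{r}_1 - \bm{r}_2$ of two solutions obeys $\dot{\delta\bm{r}} = -\mathbf{R}\,\delta\bm{r}$, the inhomogeneous term cancelling. Since $\mathbf{R}$ is real symmetric it has a real orthonormal eigenbasis $\{\bm{u}_i\}$ with real eigenvalues $\lambda_i$. For each $i$ I would choose two genuine states differing along $\bm{u}_i$, which is possible because the density matrices form a full-dimensional convex body and $\Delta_i = \sum_k (\bm{u}_i)_k B_k$ is a nonzero traceless Hermitian direction, e.g. $\rho_{1,2} = \rho_{eq} \pm \tfrac{\epsilon}{2}\Delta_i$ for small $\epsilon$. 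Along this eigendirection $\delta\bm{r}(t) = e^{-\lambda_i t}\delta\bm{r}(0)$, so the operator difference obeys $\Delta_i(t) = e^{-\lambda_i t}\Delta_i(0)$ and the trace distance is exactly $D(t) = e^{-\lambda_i t}D(0)$. Strict contractivity forces $D(t)$ to be strictly decreasing, and since $e^{-\lambda_i t}>0$ this holds iff $\lambda_i > 0$; as $i$ was arbitrary, every eigenvalue is positive and $\mathbf{R}\succ 0$.

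The main obstacle is the symmetry step: it must be extracted from the self-adjoint (detailed-balance) structure of $\mathcal{R}$ and does not follow from the Lindblad form alone. By contrast, once realness and symmetry are secured, positive definiteness follows cleanly from the key observation that eigenvectors of $\mathbf{R}$ generate difference directions evolving as pure real exponentials $e^{-\lambda_i t}$, which converts the trace-distance hypothesis directly into the sign condition $\lambda_i>0$.
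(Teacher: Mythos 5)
Your overall strategy---realness from Hermiticity preservation, symmetry from self-adjointness of $\mathcal{R}$, and positive definiteness from strict contractivity---is the same as the paper's, but the mechanism you use for positive definiteness is genuinely different. The paper keeps the drift Hamiltonian, writes the free dynamics as $\bm{\dot r} = (\mathbf{H}_S - \mathbf{R})(\bm{r} - \bm{r}_0)$ with $\bm{r}_0$ the unique fixed point, and differentiates the distance of an arbitrary state to $\bm{r}_0$, killing the $\mathbf{H}_S$ contribution by antisymmetry; positivity of the quadratic form $(\bm{r}-\bm{r}_0)^T\mathbf{R}(\bm{r}-\bm{r}_0)$ over all admissible directions then gives $\mathbf{R}\succ 0$. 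You instead diagonalize $\mathbf{R}$, build pairs of states separated along each eigendirection, and read the sign of each eigenvalue off the exact scaling $e^{-\lambda_i t}$. Your version has one clear advantage: since the operator difference is merely rescaled, the trace distance and the Euclidean norm of the coherence vector contract by the same factor, so you never have to conflate the two norms (the paper silently identifies the trace distance with $\tfrac{1}{2}\left|\bm{r}-\bm{r}_0\right|$, which is exact only for a single qubit). On the symmetry step you are more candid than the paper, which simply cites references for self-adjointness of $\mathcal{R}$; your observation that the recycling part of a generic Lindblad generator is not self-adjoint and that symmetry is a property of the relaxation model rather than of the Lindblad form is correct and worth keeping, though, like the paper, you do not actually complete that step.

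One wrinkle in your eigendirection argument: the closed form $\delta\bm{r}(t)=e^{-\lambda_i t}\delta\bm{r}(0)$ holds only because you set $H=0$. The hypothesis of the proposition concerns the free relaxation channel, whose generator in the paper's treatment includes $H_S$; with $\mathbf{H}_S\neq 0$ an eigendirection of $\mathbf{R}$ is not preserved, the difference evolves as $e^{(\mathbf{H}_S-\mathbf{R})t}\delta\bm{r}(0)$, and your exact exponential fails for $t>0$. The repair is either to state explicitly that ``pure relaxation'' means the $H=0$ semigroup, or to use only the derivative at $t=0^{+}$, where $\mathbf{H}_S$ drops out of $\frac{d}{dt}\,\delta\bm{r}^T\delta\bm{r}$ by antisymmetry---which is precisely the paper's computation (at the cost of reintroducing the trace-norm versus Euclidean-norm identification). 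Either fix is routine, so I would call this a presentational gap rather than a fatal one.
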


\begin{proof}
In general $\mathbf{R}$ is self-adjoint, the proof of which can be found in \cite{K,A03}. Provided that $\rho$ is decomposed with respect to the generalized Pauli basis $\left\{ B_k \right\}_{k=0}^{4^n-1}$, then as $\mathcal{R} B_j$ is self-adjoint, $\text{Tr}\left(- {{B_k}\mathcal{R}{B_j}} \right)$ should be real for all $k, j =1, ..., 4^n-1$. So $\mathbf{R}$ is real and hence symmetric.

If the pure relaxation process leads to a strictly contractive channel, then there exists a unique fixed point, which we denote by  $\bm{r}_0$ (\cite{NC}, page 408). It is the unique solution to the equation $(\mathbf{H}_S - \mathbf{R}) \bm{r}_0 + \bm{v} = 0$. Thus $\mathbf{H}_S - \mathbf{R}$ should be of full rank such that $\bm{r}_0 = (-\mathbf{H}_S + \mathbf{R})^{-1}\bm{v}$. The pure relaxation dynamics is then described by
\begin{equation}
\bm{\dot  r} = (\mathbf{H}_S -  \mathbf{R}) (\bm{r} -\bm{r}_0). \nonumber
\end{equation}
Furthermore, the property of strictly contractiveness implies that, for any state $\bm{r}(t)$ other than $\bm{r}_0$, the time derivative of the trace distance of which to $\bm{r}_0$ should satisfy
\begin{equation}
\dot D(\bm{r}(t), \bm{r}_0) =  \frac{1}{2} \frac{d \left| \bm{r} - \bm{r}_0 \right|}{dt}
= -   \frac{(\bm{r} - \bm{r}_0)^T \mathbf{R} (\bm{r} - \bm{r}_0)}{4 \left| \bm{r} - \bm{r}_0 \right|}
<  0. \nonumber
\end{equation}
This condition can be fulfilled only if $\mathbf{R}$ is symmetric positive definite.
\end{proof}

Since we have assumed at the beginning that our considered system under pure relaxation process is strictly contractive, $\mathbf{R}$ should be of full rank by the above proposition, thus if we define
\begin{equation}
\bm{r}_{eq} = \mathbf{R}^{-1} \bm{v},
\end{equation}
then accordingly we will have
\begin{equation}
\bm{\dot  r} = \mathbf{H} \bm{r} -  \mathbf{R}(\bm{r} - {\bm{r}_{eq}}).
\label{bloch}
\end{equation}

\section{Derivation of Eq. (4)}
The derivation of the projected dynamic equation can actually be found in \cite{Y}, which we copy as follows.

At each instant of time, we can diagonalize the density matrix $\rho(t) = U(t) \Lambda(t) U^\dag (t)$ by a unitary matrix $U(t)$. Substitute it into the Lindblad Eq. (1), we get
\begin{align}
\dot \Lambda (t) & = {{\dot U}^\dag }(t)\rho (t)U(t) + {U^\dag }(t)\dot \rho (t)U(t) + {U^\dag }(t)\rho (t)\dot U(t)  \nonumber \\
& =  i{U^\dag }(t){H'}(t)\rho (t)U(t) + {U^\dag }(t)\left\{ { - i\left[ {H(t),\rho(t) } \right] + \mathcal{R}\rho(t) } \right\}U(t) - {U^\dag }(t)\rho (t)i{H'}(t)U(t)  \nonumber \\
& =  - i{U^\dag }(t)\left[ {H(t) - H'(t),\rho(t)} \right]U(t) + {U^\dag }(t)\mathcal{R}\left( {\rho(t)} \right)U(t),  \nonumber \\
& =  - i{U^\dag }(t)\left[ {H(t) - H'(t),U(t)\Lambda (t){U^\dag }(t)} \right]U(t) + {U^\dag }(t)\mathcal{R}\left( {U(t)\Lambda (t){U^\dag }(t)} \right)U(t),  \nonumber
\end{align}
where we have defined $\dot U(t) =  - i{H'}(t)U(t)$, and ${H'}(t) $ must be Hermitian by the fact that  $d(U(t) U^\dag(t))/dt = 0$.

Note that the left side of the above equation is a diagonal matrix, so for the right side we only need to keep the diagonal part. Moreover, the first term on the right side is a commutation of two Hermitian matrices, and since $\Lambda (t)$ is diagonal, so the diagonal part of this commutation must be zero. Therefore, the above equation reduces to
\begin{equation}
\dot \Lambda (t) = \operatorname{diag} \left( {U^\dag }(t)\mathcal{R}\left( {U(t)\Lambda (t){U^\dag }(t)} \right)U(t)
 \right).
\end{equation}

Now we go to the vector representation.

\begin{center}
\begin{tikzpicture}[every node/.style={midway},scale=2]
  \matrix[column sep={4em,between origins}, row sep={2em}] at (0,0) {
    \node(rho) {$\rho$}  ; & \node(Lambda) {$\Lambda$}; \\
    \node(r) {$\bm{r}$}; & \node (x) {$\bm{x}$};\\
  };
  \draw[->] (rho) -- (r) ;
  \draw[->] (Lambda) -- (rho) node[anchor=south]  {\footnotesize $U(t)$};
  \draw[->] (Lambda) -- (x);
  \draw[->] (x) -- (r) node[anchor=north] {\footnotesize $\mathbf{U}(t)$};
\end{tikzpicture}
\end{center}
Substitute $\bm{r} = \mathbf{U}(t) \bm{x}$ into Eq. (\ref{bloch}), there is
\begin{align}
\bm{\dot  x} & = {\mathbf{U}^T}\mathbf{H}\mathbf{U}\bm{x} - {\mathbf{U}^T}\mathbf{R}\mathbf{U}\bm{x} + {\mathbf{U}^T}\mathbf{R}{\bm{r}_{eq}} \nonumber \\
& = \left[ {\mathbf{U}^T}\mathbf{H}\mathbf{U}\bm{x} - {\mathbf{U}^T}\mathbf{R}\mathbf{U}\bm{x} + {\mathbf{U}^T}\mathbf{R}{\bm{r}_{eq}} \right]_{\mathbf{d}}. \nonumber
\end{align}
As has been just demonstrated, the first term of the above equation should vanish, thus
\begin{equation}
\bm{\dot x} = - \left[ {{\mathbf{U}^T}\mathbf{R}\mathbf{U}}\right]_{\mathbf{d}} \bm{x} + \left[{\mathbf{U}^T} \mathbf{R} {\bm{r}_{eq}}\right]_{\mathbf{d}},
\end{equation}

\section{Under-approximation of the Reachable Set}
The projected dynamics in the diagonal subspace goes
\begin{equation}
\label{Projection}
\bm{\dot x} = - \left[ {{\mathbf{U}^T}\mathbf{R}\mathbf{U}}\right]_{\mathbf{d}} \bm{x} + \left[{\mathbf{U}^T} \mathbf{R} {\bm{r}_{eq}}\right]_{\mathbf{d}},
\end{equation}
in which $\mathbf{U}$ runs over all elements of the group $SU(2^n)$. For each $\mathbf{U} \in SU(2^n)$, there corresponds to an evolving direction. Since we assume that any unitary operation can be performed very fast compared with the relaxation timescale, the system evolving direction at state $\bm{x}$ can be adjusted to any element of the set
\begin{equation}
\left\{ \bm{\dot x}_{\mathbf{U}} = - \left[ {{\mathbf{U}^T}\mathbf{R}\mathbf{U}}\right]_{\mathbf{d}} \bm{x} + \left[{\mathbf{U}^T} \mathbf{R} {\bm{r}_{eq}}\right]_{\mathbf{d}} \left| { \mathbf{U} \in SU(2^n)} \right. \right\}.
\end{equation}

Now we want to under-approximate the system reachable set. To this end, we study the simplified reachability problem: (i) instead of considering the whole control set $SU(2^n)$, we restrict our attention to the discrete set of controls $\mathcal{Q}$; (ii) we will find the small-time local controllable set of states rather than analyzing global controllability.

For system (\ref{Projection}) under the discrete control set $\mathcal{Q}$,  we denote the set of admissible evolving directions at an arbitrary state $\bm{x}$  as
\begin{equation}
\left\{ \bm{\dot x}_{\mathbf{Q}_k} = - \left[ {{\mathbf{Q}^T}\mathbf{R}\mathbf{Q}}\right]_{\mathbf{d}} \bm{x} + \left[{\mathbf{Q}^T} \mathbf{R} {\bm{r}_{eq}}\right]_{\mathbf{d}}   \left| { \mathbf{Q}_k \in \mathcal{Q}} \right. \right\}.
\end{equation}
Let $\operatorname{cone}(\left\{\bm{\dot x}_{\mathbf{Q}_k}\right\})$ be the convex cone generated by the vector fields $\left\{\bm{\dot x}_{\mathbf{Q}_k}\right\}$ through conical combination:
\begin{equation}
\operatorname{cone}(\left\{\bm{\dot x}_{\mathbf{Q}_k}\right\}) = \left\{  \sum\limits_{k = 1}^{{2^n}!} {{c_k}{{\bm{\dot x}}_{{\mathbf{Q}_k}}}} | {{c_k} \ge 0,{\mathbf{Q}_k} \in \mathcal{Q}}  \right\}.
\end{equation}
Then (\cite{Rooney12}, page 56),
\begin{proposition}
Given the discrete set of admissible vector fields $\left\{\bm{\dot x}_{\mathbf{Q}_k}\right\}$, one can and only can generate motions in the convex cone $\operatorname{cone}(\left\{\bm{\dot x}_{\mathbf{Q}_k}\right\})$.
\end{proposition}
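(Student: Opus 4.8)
The plan is to read this as a \emph{relaxation} (or \emph{chattering control}) statement for the differential inclusion $\bm{\dot x} \in F(\bm{x}) := \{\bm{\dot x}_{\mathbf{Q}_k}(\bm{x})\}_{k=1}^{2^n!}$ and to prove the two inclusions separately: that every infinitesimal motion realizable at a state $\bm{x}$ lies in $\operatorname{cone}(\{\bm{\dot x}_{\mathbf{Q}_k}\})$ (necessity), and that every element of this cone is realizable (sufficiency). I work locally at a fixed base point $\bm{x}$ and consider net displacements over a short time $T$, since ``generating motions'' refers to the feasible directions of the reachable set at $\bm{x}$. The key structural fact I will exploit is that each admissible vector field is \emph{affine} in $\bm{x}$, namely $\bm{\dot x}_{\mathbf{Q}_k}(\bm{x}) = -[\mathbf{Q}_k^T \mathbf{R} \mathbf{Q}_k]_{\mathbf{d}}\,\bm{x} + [\mathbf{Q}_k^T \mathbf{R}\,\bm{r}_{eq}]_{\mathbf{d}}$, which yields a uniform Lipschitz constant and makes the leading-order estimates below elementary.

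For necessity, I would note that any admissible control is a measurable selection $t \mapsto k(t)$, so the resulting trajectory obeys $\bm{\dot x}(t) = \bm{\dot x}_{\mathbf{Q}_{k(t)}}(\bm{x}(t))$ for almost every $t$, and integrating gives the net displacement $\Delta\bm{x} = \int_0^T \bm{\dot x}_{\mathbf{Q}_{k(t)}}(\bm{x}(t))\,dt$. Since the velocities are bounded, $\bm{x}(t)$ stays within $O(T)$ of $\bm{x}$; replacing $\bm{x}(t)$ by $\bm{x}$ in the integrand costs only $O(T^2)$ by the Lipschitz bound, so $\Delta\bm{x} = \sum_k \mu_k\, \bm{\dot x}_{\mathbf{Q}_k}(\bm{x}) + O(T^2)$, where $\mu_k \ge 0$ is the total time spent using $\mathbf{Q}_k$. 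The leading term is a conical combination; dividing by $T$ and letting $T \to 0$ shows that every feasible direction lies in the closed cone.

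For sufficiency, given a target $\bm{v} = \sum_k c_k\, \bm{\dot x}_{\mathbf{Q}_k}(\bm{x})$ with all $c_k \ge 0$, I would construct for each small $\epsilon > 0$ a control on $[0,\epsilon T]$, $T := \sum_k c_k$, that applies $\mathbf{Q}_k$ for total duration $\epsilon c_k$ (finely subdivided and interleaved so the trajectory hugs $\bm{x}$). The same continuity estimate gives net displacement $\epsilon \bm{v} + O(\epsilon^2)$, so the realized motion points along $\bm{v}$ as $\epsilon \to 0$ and can be made arbitrarily small. Because the total time $\epsilon T$ is itself unconstrained, every positive scaling of $\bm{v}$ is reachable in the limit; this is exactly why one obtains the full cone rather than only the convex hull of the $\bm{\dot x}_{\mathbf{Q}_k}$. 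In the general nonlinear setting this step is precisely the content of the Filippov--Wa\.zewski relaxation theorem applied to $F$.

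I expect the main obstacle to be making the $O(T^2)$ / $O(\epsilon^2)$ error control uniform and genuinely rigorous, i.e., certifying that the state-dependence of the vector fields along the moving trajectory does not perturb the limiting direction. This rests on the affine (hence globally Lipschitz) form of $\bm{\dot x}_{\mathbf{Q}_k}$ together with a Gr\"onwall-type bound on the deviation $|\bm{x}(t) - \bm{x}|$, and finally on passing to the closure of the cone so that the vanishing remainder terms are absorbed. The convexity of the achievable set, which one might expect to require a separate argument, comes for free here: it is built into the conical-combination structure of the time-sharing construction.
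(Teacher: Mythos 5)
Your argument is correct, but note that the paper does not actually prove this proposition: it is stated as a quoted result from Rooney's thesis (\cite{Rooney12}, p.~56) with no proof supplied in either the main text or the Supplemental Material. Your chattering/time-sharing argument is the standard self-contained way to establish it, and both inclusions are handled soundly: the affine form of $\bm{\dot x}_{\mathbf{Q}_k}$ gives the uniform Lipschitz bound you need, a finitely generated cone is automatically closed so the limit of feasible directions stays inside it, and you correctly identify why unbounded total time yields the full cone rather than merely the convex hull of the velocities. The one point worth making explicit is the sense in which the statement holds: your construction realizes a prescribed direction $\bm{v}$ only up to an $O(\epsilon^2)$ remainder, and conversely a trajectory can drift off the cone by $O(T^2)$, so the proposition is really a first-order (tangent-cone) statement about the reachable set at $\bm{x}$ rather than an exact description for finite time. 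You acknowledge this by passing to closures, and it is exactly the right resolution for the way the proposition is used downstream --- the STLC test only asks whether $\operatorname{cone}(\{\bm{\dot x}_{\mathbf{Q}_k}\}) = \mathbb{R}^{2^n-1}$, for which first-order information suffices. In short, your proof fills a gap the paper leaves to a citation, and does so correctly.
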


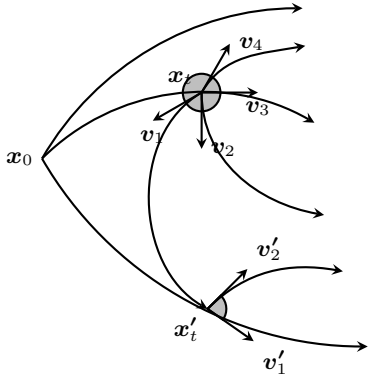
\begin{figure}[h]
\centering
 \begin{tikzpicture}

 \draw[thick,fill=gray!50] (3/1.414,3-3/1.414) ellipse (0.25cm and 0.25cm);
 \draw[thick,fill=gray!50] (2.2,-2) -- (2.2+0.25*0.8192,-2-0.25*0.5736) arc (-35:45:0.25cm)-- cycle;

 \draw [>=stealth,->,thick] (0,0) arc [radius=4, start angle=150, end angle= 90];
 \draw [>=stealth,->,thick] (0,0) arc [radius=5, start angle=210, end angle= 270];

 \draw [>=stealth,->,thick] (0,0) arc [radius=3, start angle=135, end angle= 60];
 \draw [>=stealth,->,thick] (3/1.414,3-3/1.414) to [out=60,in=190] (3.5,1.5);
 \draw [>=stealth,->,thick] (3/1.414,3-3/1.414) to [out=-90,in=170] (3.75,-0.75);
 \draw [>=stealth,->,thick] (3/1.414,3-3/1.414) to [out=-150,in=150] (2.2,-2);

 \draw [>=stealth,->,thick] (3/1.414,3-3/1.414) -- (3/1.414+0.75*0.5,3-3/1.414+0.75*0.866) node [anchor=west] {\small $\bm{v}_4$};
 \draw [>=stealth,->,thick] (3/1.414,3-3/1.414) -- (3/1.414+0.75,3-3/1.414) node [anchor=north] {\small $\bm{v}_3$};
 \draw [>=stealth,->,thick] (3/1.414,3-3/1.414) -- (3/1.414,3-3/1.414-0.75) node [anchor=  west] {\small $\bm{v}_2$};
 \draw [>=stealth,->,thick] (3/1.414,3-3/1.414) -- (3/1.414-0.75*0.866,3-3/1.414-0.75*0.5) node [anchor=north] {\small $\bm{v}_1$};

 \draw [>=stealth,->,thick] (2.2,-2) to [out=45,in=170] (4,-1.5);
 \draw [>=stealth,->,thick] (2.2,-2) -- (2.2+0.75/1.414,-2+0.75/1.414) node [anchor=south west] {\small $\bm{v'}_2$};
 \draw [>=stealth,->,thick] (2.2,-2) -- (2.2+0.75*0.8192,-2-0.75*0.5736) node [anchor=north west] {\small $\bm{v'}_1$};

 \node [anchor=east] at (0,0) {\small $\bm{x}_0$};
 \node [anchor=south east] at (3/1.414,3-3/1.41) {\small $\bm{x}_t$};
 \node [anchor=north east] at (2.2,-2) {\small $\bm{x'}_t$};

\end{tikzpicture}
\caption{Illustration of STLC property. It can be seen that the cone generated by the velocity vectors  at $\bm{x}_t$ is the full space, while this is not true for $\bm{x'}_t$.}
\end{figure}

Let $\operatorname{Reach}_{\mathcal{Q}}(\bm{x}, T)$ $(T>0)$ denote the reachable set from state $\bm{x}$ under control $\mathcal{Q}$ during time $[0,T]$. We say system (\ref{Projection}) is \emph{small-time local controllable (STLC) at point $\bm{x}$} if $\bm{x}$ belongs to the interior of the reachable set $\operatorname{Reach}_{\mathcal{Q}}(\bm{x}, T)$ for all $T > 0$. In other words, for STLC at a point we need to be able to generate small motions in any direction of the full space $\mathbb{R}^{2^n-1}$ at that point. So one has that,
system (\ref{Projection}) is STLC at point $\bm{x}$ iff $\operatorname{cone}(\left\{\bm{\dot x}_{\mathbf{Q}_k}\right\}) = \mathbb{R}^{2^n-1}$. We denote the system STLC set under the discrete control set $\mathcal{Q}$ by $\Omega_{\mathcal{Q}}$.

\subsection{Constructing the STLC Set}

P. Rooney analytically constructed the STLC set under the control set $\mathcal{Q}$ in Ref. \cite{Rooney12}. It turns out that $\Omega_{\mathbf{Q}}$ is open, compact and connected, and its boundary is composed of a number of surfaces. We here just copy the core result obtained by P. Rooney:

\begin{theorem}[\cite{Rooney12}, page 94]
For every subset $\sigma  \subset \left\{1,2,...,2^n!\right\}$ with $2^n-1$ elements, construct the hypersurface $\bm{x}_{\sigma} = {\left( {\sum\nolimits_{k \in \sigma } {{\mu _k}\left[\mathbf{Q}_k^T\mathbf{R}{\mathbf{Q}_k} \right]_{\mathbf{d}} }} \right)^{ - 1}}\left( {\sum\nolimits_{k \in \sigma} {{\mu _k}\left[\mathbf{Q}_k^T\mathbf{R}{\bm{r}_{eq}}\right]_{\mathbf{d}} } } \right)$, where ${\mu _k} \ge 0$ and $\sum\nolimits_{k \in \sigma } {{\mu _k}}  = 1$. Denote $\bigcup\nolimits_\sigma  {{\bm{x}_\sigma }} $ as the union of all such hypersurfaces, which is a closed hypersurface.  Then $\Omega_{\mathbf{Q}}$ is an open set whose closure is equal to the closure of $\bigcup\nolimits_\sigma  {{\bm{x}_\sigma }} $.
\end{theorem}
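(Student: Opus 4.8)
The plan is to translate the small-time local controllability (STLC) condition into a purely convex-geometric statement about the origin's position relative to a moving polytope, and then read off the boundary. Write $d = 2^n-1$, $A_k := \left[\mathbf{Q}_k^T\mathbf{R}\mathbf{Q}_k\right]_{\mathbf{d}}$ and $b_k := \left[\mathbf{Q}_k^T\mathbf{R}\bm{r}_{eq}\right]_{\mathbf{d}}$, so that the admissible velocity at $\bm{x}$ under $\mathbf{Q}_k$ is $\bm{\dot x}_{\mathbf{Q}_k}(\bm{x}) = -A_k\bm{x}+b_k$. Since $\mathbf{R}$ is symmetric positive definite and each $\mathbf{Q}_k$ is orthogonal, every $A_k$ is a principal compression of a positive-definite matrix and hence itself positive definite on $\mathbb{R}^d$. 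By Proposition 2, STLC at $\bm{x}$ is equivalent to $\operatorname{cone}\!\left(\{\bm{\dot x}_{\mathbf{Q}_k}(\bm{x})\}\right)=\mathbb{R}^d$. I would first invoke the standard positive-spanning lemma: a finite set of vectors has conical hull equal to $\mathbb{R}^d$ if and only if the origin lies in the interior of its convex hull. This gives the clean description $\Omega_{\mathcal{Q}} = \{\bm{x} : 0 \in \operatorname{int}\operatorname{conv}(\{-A_k\bm{x}+b_k\}_k)\}$, from which openness is immediate (the interior condition is open and the vertices depend continuously on $\bm{x}$), and boundedness follows from coercivity: for $\|\bm{x}\|$ large, $\langle \bm{x}, -A_k\bm{x}+b_k\rangle \le -\lambda_{\min}(A_k)\|\bm{x}\|^2 + \|b_k\|\,\|\bm{x}\| < 0$ for every $k$, so all velocities lie in one open half-space and $0\notin\operatorname{conv}$. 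Hence $\overline{\Omega_{\mathcal{Q}}}$ is compact.

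Next I would characterize the boundary as the locus where the origin transitions from interior to exterior of the polytope, i.e. $\bm{x}\in\partial\Omega_{\mathcal{Q}} \Leftrightarrow 0\in\partial\operatorname{conv}(\{-A_k\bm{x}+b_k\})$. At such a point there is a supporting hyperplane $w^{\perp}$ at the origin with $\langle w, -A_k\bm{x}+b_k\rangle \le 0$ for all $k$; the face $F = \operatorname{conv}\cap w^{\perp}$ contains $0$ and lies in the $(d-1)$-dimensional hyperplane $w^{\perp}$. Applying Carath\'eodory's theorem inside $w^{\perp}$, the origin is a convex combination of at most $d$ of the vertices lying on $F$: there is $\sigma$ with $|\sigma|\le d$, weights $\mu_k\ge 0$, $\sum_{k\in\sigma}\mu_k = 1$, and $\sum_{k\in\sigma}\mu_k(-A_k\bm{x}+b_k)=0$. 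Because $\sum_{k\in\sigma}\mu_k A_k$ is a convex combination of positive-definite matrices, it is invertible, and solving for $\bm{x}$ returns precisely the claimed formula $\bm{x}_{\sigma} = \left(\sum_{k\in\sigma}\mu_k A_k\right)^{-1}\!\left(\sum_{k\in\sigma}\mu_k b_k\right)$. This establishes the forward inclusion $\partial\Omega_{\mathcal{Q}}\subseteq\overline{\bigcup_\sigma \bm{x}_\sigma}$, subsets of size below $d$ being recovered in the closure by letting the superfluous weights tend to zero.

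To assemble the closed hypersurface and identify it with the boundary, I would parametrize each piece by $(\sigma,\mu)$ over the finite index set of $d$-subsets times the compact simplex $\{\mu_k\ge0,\sum\mu_k=1\}$; continuity and compactness make $\overline{\bigcup_\sigma\bm{x}_\sigma}$ a closed set, generically of dimension $d-1$ since each simplex of parameters maps to a hypersurface patch. The reverse inclusion is the delicate point: a point $\bm{x}_\sigma$ where $0$ lies in the convex hull of a $d$-subset of velocities need not be a boundary point of $\Omega_{\mathcal{Q}}$, because the remaining velocities may surround the origin and keep it interior to the full polytope. I would resolve this by retaining only those $(\sigma,\mu)$ for which the chosen $d$ velocities lie on a common supporting hyperplane of the entire velocity set at $0$; these are exactly the configurations with $0\in\partial\operatorname{conv}$, hence exactly $\partial\Omega_{\mathcal{Q}}$, and the outer envelope of the $\bm{x}_\sigma$ sheets coincides with this boundary. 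Combined with the boundedness above, this exhibits $\Omega_{\mathcal{Q}}$ as the open region enclosed by the closed hypersurface $\overline{\bigcup_\sigma \bm{x}_\sigma}$.

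The hard part will be precisely the reverse inclusion just described, namely pruning the family $\{\bm{x}_\sigma\}$ down to the genuine boundary and proving that the surviving sheets glue into a single closed hypersurface with $\Omega_{\mathcal{Q}}$ as its interior. The forward direction is essentially bookkeeping with the positive-spanning and Carath\'eodory lemmas, but certifying that the extremal pieces form a connected, orientable, codimension-one separatrix (rather than having spurious interior folds) requires a transversality or topological-degree argument and careful treatment of the lower-dimensional faces where several $A_k\bm{x}-b_k$ become affinely dependent.
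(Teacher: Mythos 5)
The paper itself gives no proof of this theorem: it is imported verbatim from Rooney's thesis (\cite{Rooney12}, p.~94), with the supplement only supplying the surrounding scaffolding (Proposition~2, the fundamental theorem of linear inequalities, and the STLC-testing corollary). So your proposal has to be judged on its own terms rather than against an in-paper argument. On those terms, the forward half is sound and well aligned with the paper's toolkit: your positive-spanning lemma ($\operatorname{cone}(\{\bm{v}_k\})=\mathbb{R}^d$ iff $0\in\operatorname{int}\operatorname{conv}(\{\bm{v}_k\})$) is an equivalent, cleaner form of the paper's Corollary~1; the positive definiteness of each $A_k$ as a principal compression of $\mathbf{Q}_k^T\mathbf{R}\mathbf{Q}_k$ is correct and is what makes $\sum_{k\in\sigma}\mu_k A_k$ invertible; openness and boundedness of $\Omega_{\mathcal{Q}}$ follow as you say; and the supporting-hyperplane-plus-Carath\'eodory step correctly produces the formula for $\bm{x}_\sigma$ with $|\sigma|=2^n-1$, giving $\partial\Omega_{\mathcal{Q}}\subseteq\overline{\bigcup_\sigma\bm{x}_\sigma}$.

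The genuine gap is the reverse inclusion, and you have correctly located it but not closed it. Note first that the theorem as transcribed cannot hold verbatim: $\overline{\Omega_{\mathcal{Q}}}$ is the closure of a nonempty open set (full-dimensional), while $\bigcup_\sigma\bm{x}_\sigma$ is a finite union of images of $(d-1)$-simplices, hence of measure zero, so the two closures cannot coincide; the intended content must be that $\partial\Omega_{\mathcal{Q}}$ is carved out by these sheets. Your proposed repair --- pruning to those $(\sigma,\mu)$ whose $d$ velocities lie on a common supporting hyperplane of the full velocity polytope --- proves a statement about a strictly smaller union than the one the theorem defines, so it does not establish the claimed identity either; to finish you would have to show either that every sheet $\bm{x}_\sigma(\mu)$ (including the ``spurious'' ones where the remaining velocities still surround the origin) lies in $\overline{\Omega_{\mathcal{Q}}}$, or explicitly reformulate the theorem as a statement about the outer envelope. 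The gluing of the surviving patches into a single closed, separating hypersurface is likewise only asserted, not argued; the degenerate loci where the $\{-A_k\bm{x}+b_k\}_{k\in\sigma}$ become affinely dependent are exactly where the ``hypersurface'' description can break down, and handling them is where the real work in Rooney's construction lies.
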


\subsection{Algorithm for Testing STLC}
In practice, it  is extremely difficult to compute   $\Omega_{\mathbf{Q}}$. Just take two-qubit system as an example, the number of surfaces to be computed is $C_{24}^4 = 10626$, and an extra effort of making the union of these surfaces has to be made. Here
in order to compute the boundary of $\Omega_{\mathbf{Q}}$ in the case of a two-qubit system, we choose an alternative approach that can  reduce practical computing efforts to a large extent. Basically, we sample the state space with a discrete set of points, and choose those STLC points among them through a STLC testing algorithm.

To determine whether a vector is in a convex cone, belongs to the class of point-in-polygon problems, and  there exist a number of algorithms to solve them in the realm of computational geometry.
Here we  present a testing algorithm based on  the so called \emph{fundamental theorem of linear inequalities} (this classic theorem is due to Farkas, Minkowski, Carath\'{e}odory, Weyl, etc.) \cite{S86}.

%

\begin{theorem}[Fundamental Theorem of Linear Inequalities]
Let $\left\{ \bm{v}_k \right\}$ and $\bm{u}$ be vectors in $\mathbb{R}^m$, and suppose $\operatorname{span}(\left\{ \bm{v}_k \right\}) =\mathbb{R}^m$. Then exactly one of the two statements is true: (i) $\bm{u} \in \operatorname{cone}(\left\{ \bm{v}_k \right\})$; (ii) there exists a hyeperplane $\left\{ \bm{x} \vert \bm{n} \cdot \bm{x} =0 \right\}$, containing $m-1$ linearly independent vectors from $\left\{ \bm{v}_k \right\}$, such that $\bm{n} \cdot \bm{u} >0$ and  $\bm{n} \cdot \bm{v}_k \le 0$ for all $k$, which means that there is a hyperplane spanned by $m-1$ vectors from $\left\{ \bm{v}_k \right\}$ (with $\bm{n}$ being its normal vector) separating $\operatorname{cone}(\left\{ \bm{v}_k \right\})$ and $\bm{u}$.
\end{theorem}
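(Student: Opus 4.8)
The plan is to prove the dichotomy in two parts: that the alternatives (i) and (ii) are \emph{mutually exclusive}, and that \emph{at least one} of them must hold.

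Mutual exclusivity is immediate. Suppose both held simultaneously. From (i) write $\bm{u} = \sum_k c_k \bm{v}_k$ with all $c_k \ge 0$, and take the inner product with the normal $\bm{n}$ supplied by (ii): $\bm{n}\cdot\bm{u} = \sum_k c_k\,(\bm{n}\cdot\bm{v}_k) \le 0$, since every $c_k \ge 0$ and every $\bm{n}\cdot\bm{v}_k \le 0$. This contradicts the requirement $\bm{n}\cdot\bm{u} > 0$ in (ii). Hence at most one alternative can occur.

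For existence I would use the classical constructive simplex-type pivoting argument. Because $\operatorname{span}(\{\bm{v}_k\}) = \mathbb{R}^m$, I first select an ordered \emph{basis} $B = (\bm{b}_1,\dots,\bm{b}_m)$ of $m$ linearly independent vectors chosen from $\{\bm{v}_k\}$ and expand $\bm{u} = \sum_{i=1}^m \lambda_i \bm{b}_i$. If all $\lambda_i \ge 0$, then $\bm{u}$ is a conical combination of the $\bm{v}_k$ and alternative (i) holds. Otherwise let $h$ be the smallest index with $\lambda_h < 0$, and let $\bm{n}$ be a normal to the hyperplane spanned by the $m-1$ vectors $\{\bm{b}_i : i \ne h\}$, oriented so that $\bm{n}\cdot\bm{b}_h < 0$; this orientation is available because $\bm{b}_h$ lies outside that hyperplane. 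Then $\bm{n}\cdot\bm{b}_i = 0$ for $i \ne h$ (so the hyperplane already contains $m-1$ linearly independent members of $\{\bm{v}_k\}$) and $\bm{n}\cdot\bm{u} = \lambda_h\,(\bm{n}\cdot\bm{b}_h) > 0$. If moreover $\bm{n}\cdot\bm{v}_k \le 0$ for \emph{every} $k$, then $\bm{n}$ witnesses alternative (ii) and I stop; otherwise there is some $\bm{v}_s$ with $\bm{n}\cdot\bm{v}_s > 0$, which I pivot into the basis in exchange for a suitably chosen leaving vector, and repeat the loop.

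The linear-algebra verifications — that the oriented normal has the stated inner products, and that each exchange again produces $m$ linearly independent vectors — are routine. The real obstacle is \emph{termination}: with a careless pivoting rule the algorithm can cycle endlessly among bases. To control this I would impose an anti-cycling rule, namely Bland's smallest-index rule, in which the entering vector $\bm{v}_s$ and the leaving basis vector are each always taken with least admissible index, and then show by contradiction that no basis recurs: assuming a cycle, one tracks the largest index that both enters and leaves during the cycle and extracts a sign contradiction from the pivot conditions. Since there are only finitely many $m$-element subsets of $\{\bm{v}_k\}$, the absence of repeated bases forces the procedure to halt, and it can halt only in case (i) or case (ii). This proves that at least one alternative always holds and, together with mutual exclusivity, establishes the theorem. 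A non-constructive alternative would invoke the separating-hyperplane theorem after first proving that the finitely generated cone $\operatorname{cone}(\{\bm{v}_k\})$ is closed, but then recovering the refinement that the separating hyperplane contains $m-1$ independent generators is itself delicate, so I prefer the pivoting route.
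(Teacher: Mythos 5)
Your proposal is correct in outline, but note that the paper itself does not prove this theorem at all: it is stated as a classical result (attributed to Farkas, Minkowski, Carath\'{e}odory, Weyl) and cited to Schrijver's \emph{Theory of Linear and Integer Programming}; the only thing the paper proves is the corollary that follows. What you have written is, in substance, exactly the proof from that cited source: the mutual-exclusivity half is the trivial inner-product computation and is complete as stated, and the existence half is the standard basis-exchange argument. Your sign bookkeeping checks out --- with $\bm{n}\cdot\bm{b}_i=0$ for $i\ne h$ one gets $\bm{n}\cdot\bm{u}=\lambda_h(\bm{n}\cdot\bm{b}_h)>0$, and the entering vector $\bm{v}_s$ with $\bm{n}\cdot\bm{v}_s>0$ lies off the hyperplane, so the exchanged set is again a basis. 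Two remarks. First, the leaving vector needs no separate ratio test here: in this algorithm it is simply $\bm{b}_h$, the basis vector with the smallest index among those with $\lambda_h<0$, so "suitably chosen leaving vector" is vaguer than necessary. Second, and more importantly, the only genuinely nontrivial step of the whole proof is termination, and you assert rather than execute it: the Bland-rule contradiction (track the largest index $D$ that both enters and leaves during a putative cycle, compare the expansion of $\bm{u}$ in the basis at the iteration where $a_D$ leaves against the separating functional at the iteration where it enters, and derive a sign clash) is where all the work lives. As a blind reconstruction of a cited classical theorem your plan is sound and would be routine to complete, but in its present form it is a correct skeleton with the hardest bone left unset.
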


The theorem provides a criterion to test whether a given set of vector fields is STLC.

\begin{corollary}
Let $\left\{ \bm{v}_k \right\}$ be vectors in $\mathbb{R}^m$, and $\operatorname{span}(\left\{ \bm{v}_k \right\}) =\mathbb{R}^m$. Then  $\operatorname{cone}(\left\{ \bm{v}_k \right\}) = \mathbb{R}^m $, iff for any hyeperplane spanned by $m-1$ linearly independent vectors from $\left\{ \bm{v}_k \right\}$, its normal vector $\bm{n}$ satisfies that $\left\{ \bm{n} \cdot \bm{v}_k\right\}$ is not all nonpositive and not all nonnegative.
\end{corollary}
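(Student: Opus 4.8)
The plan is to obtain the corollary as a direct specialization of the Fundamental Theorem of Linear Inequalities: I would run the theorem's dichotomy over every $\bm{u}\in\mathbb{R}^m$ and identify the separating-hyperplane alternative with the stated sign condition. The first observation to record is that a hyperplane ``spanned by $m-1$ linearly independent vectors from $\{\bm{v}_k\}$'' determines its normal $\bm{n}$ only up to sign, so the symmetric requirement that $\{\bm{n}\cdot\bm{v}_k\}$ be \emph{neither all nonpositive nor all nonnegative} is exactly the assertion that, for \emph{both} orientations $\pm\bm{n}$, the one-sided inequality ``$(\pm\bm{n})\cdot\bm{v}_k\le 0$ for all $k$'' fails. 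This matching is the conceptual core; the rest is bookkeeping.

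For the direction $(\Leftarrow)$ I would take an arbitrary $\bm{u}\in\mathbb{R}^m$ and suppose for contradiction that $\bm{u}\notin\operatorname{cone}(\{\bm{v}_k\})$. Since $\operatorname{span}(\{\bm{v}_k\})=\mathbb{R}^m$, the hypothesis of the theorem is met, so alternative (i) failing forces alternative (ii): there is a hyperplane spanned by $m-1$ linearly independent $\bm{v}_k$'s, with normal $\bm{n}$, satisfying $\bm{n}\cdot\bm{u}>0$ and $\bm{n}\cdot\bm{v}_k\le 0$ for all $k$. But then $\{\bm{n}\cdot\bm{v}_k\}$ is all nonpositive, contradicting the hypothesis applied to this very hyperplane. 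Hence every $\bm{u}$ lies in the cone, i.e.\ $\operatorname{cone}(\{\bm{v}_k\})=\mathbb{R}^m$.

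For the direction $(\Rightarrow)$ I would argue contrapositively. Suppose some hyperplane spanned by $m-1$ linearly independent $\bm{v}_k$'s, with normal $\bm{n}$, violates the sign condition, so that either $\bm{n}\cdot\bm{v}_k\le 0$ for all $k$ or $\bm{n}\cdot\bm{v}_k\ge 0$ for all $k$; after replacing $\bm{n}$ by $-\bm{n}$ if needed I may assume the former. Taking the witness $\bm{u}=\bm{n}$ gives $\bm{n}\cdot\bm{u}=|\bm{n}|^2>0$ alongside $\bm{n}\cdot\bm{v}_k\le 0$, so alternative (ii) of the theorem holds for this $\bm{u}$; by the exclusivity of the dichotomy alternative (i) must fail, whence $\bm{n}\notin\operatorname{cone}(\{\bm{v}_k\})$ and therefore $\operatorname{cone}(\{\bm{v}_k\})\neq\mathbb{R}^m$.

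The only genuinely delicate point, and the one I would watch most carefully, is the orientation bookkeeping: the corollary's two-sided condition must be read as excluding the theorem's one-sided inequality for \emph{both} signs of $\bm{n}$, and in the $(\Rightarrow)$ direction one must actually exhibit a witness $\bm{u}$ (the choice $\bm{u}=\pm\bm{n}$ does the job) for which the theorem certifies non-membership in the cone. Beyond that, everything is a faithful transcription of the Farkas--Minkowski--Weyl dichotomy, and the span condition $\operatorname{span}(\{\bm{v}_k\})=\mathbb{R}^m$ is exactly what is needed to license each invocation of the theorem, so no harder step remains.
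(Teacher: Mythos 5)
Your proposal is correct and follows essentially the same route as the paper: the $(\Leftarrow)$ direction applies the theorem's dichotomy to an arbitrary $\bm{u}$ and derives a contradiction with the sign condition, and the $(\Rightarrow)$ direction exhibits $\pm\bm{n}$ as a point outside the cone whenever some hyperplane's normal has one-signed inner products with all the $\bm{v}_k$. The only cosmetic difference is that you certify $\bm{n}\notin\operatorname{cone}(\left\{\bm{v}_k\right\})$ via the theorem's exclusivity clause, whereas the paper's (terser) argument observes directly that every conical combination has nonpositive inner product with $\bm{n}$; your conclusion also silently corrects the paper's slip of writing $\operatorname{span}$ where $\operatorname{cone}$ is meant.
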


\begin{proof}
Positive direction. It is evident from the ``fundamental theorem of linear inequalities".

Inverse direction. If there exists a nonzero vector $\bm{n}$ such that  $\left\{ \bm{n} \cdot \bm{v}_k\right\}$ is  all nonpositive or  all nonnegative,  then surely either $\bm{n}$ or $- \bm{n}$ can not be written as a conical combination of $\left\{ \bm{v}_k \right\}$, which means  $\operatorname{span}(\left\{ \bm{v}_k \right\})$ can not be the full space.
\end{proof}

One has consequently Algorithm \ref{STLC} to test the STLC property of the vector fields $\left\{\bm{\dot x}_{\mathbf{Q}_k}\right\}$ at a given state $\bm{x}$.

\begin{algorithm}
\caption{Algorithm for STLC Testing}
\label{STLC}
\begin{algorithmic}[1]
\Require State $\bm{x}$.
\Ensure True if $\bm{x}$ is STLC; False if $\bm{x}$ is not STLC.
\State  Calculate all the vectors $\bm{\dot x}_{\mathbf{Q}_k}$, $k=1, ..., 24$;
\For{$i =1, ..., 23$}
  \For{$j = i+1, ..., 24$}
    \For{$k = 1, ..., 24 \wedge k \neq i,j$}
      \State $c_k = \bm{\dot x}_{\mathbf{Q}_i} \times \bm{\dot x}_{\mathbf{Q}_j} \cdot \bm{\dot x}_{\mathbf{Q}_k}$;
    \EndFor
    \If{$(c_1, ..., c_{24} \ge 0) \vee (c_1, ..., c_{24} \le 0)$}
      \State $d_{ij} =$ False;
    \Else
      \State $d_{ij} =$ True;
    \EndIf
  \EndFor
\EndFor
\State \textbf{Return} $\mathop  \bigwedge \nolimits_{i,j = 1}^{24} {d_{ij}}$.
\end{algorithmic}
\end{algorithm}

\section{Relaxation Matrix Tomography on Chloroform}
The basic liquid NMR relaxation theory can be found in \cite{KM}. In the rotating frame, it is routine to make a secular approximation by which the relaxation matrix would take a kite-like appearance.
The underlying principle is that, the system energy level differences are much larger than the relaxation rates, so in the interaction picture, the cross relaxation parameters between the population subspace and the coherence subspace are added with fast oscillating phases. This effectively decoupled the population subspace relaxation from the coherence subspace relaxation.
Secular approximation dramatically simplified the task of experimentally estimating the relaxation rates.

To be concrete, the relaxation dynamics can be decomposed as a direct sum of subspace dynamics (featured by the order of coherences)
\begin{itemize}
\item{population subspace:
\begin{equation}
\frac{d}{{dt}}\left( {\begin{array}{*{20}{c}}
   {1/4}  \\
   {\left\langle {ZI} \right\rangle }  \\
   {\left\langle {IZ} \right\rangle }  \\
   {\left\langle {ZZ} \right\rangle }  \\
\end{array}} \right) = \left[ {\begin{array}{*{20}{c}}
   0 & 0 & 0 & 0  \\
   { - (4{r_1} + 16{r_4})\varepsilon } & {{r_1}} & {{r_4}} & {{r_5}}  \\
   { - (4{r_4} + 16{r_2})\varepsilon } & {{r_4}} & {{r_2}} & {{r_6}}  \\
   { - (4{r_5} + 16{r_6})\varepsilon } & {{r_5}} & {{r_6}} & {{r_3}}  \\
\end{array}} \right]\left( {\begin{array}{*{20}{c}}
   {1/4}  \\
   {\left\langle {ZI} \right\rangle }  \\
   {\left\langle {IZ} \right\rangle }  \\
   {\left\langle {ZZ} \right\rangle }  \\
\end{array}} \right),
\end{equation}
}
\item{$^{13}$C one-quantum coherence subspace:
\begin{equation}
\frac{d}{{dt}}\left( {\begin{array}{*{20}{c}}
   {\left\langle {XI} \right\rangle }  \\
   {\left\langle {YI} \right\rangle }  \\
   {\left\langle {XZ} \right\rangle }  \\
   {\left\langle {YZ} \right\rangle }  \\
\end{array}} \right) = \left[ {\begin{array}{*{20}{c}}
   r_7 & 0 & r_9 & -\pi J  \\
   0 & r_7 & \pi J & r_9  \\
   r_9 & -\pi J & r_8 & 0  \\
   \pi J & {{r_9}} & 0 & r_8  \\
\end{array}} \right]\left( {\begin{array}{*{20}{c}}
  {\left\langle {XI} \right\rangle }  \\
   {\left\langle {YI} \right\rangle }  \\
   {\left\langle {XZ} \right\rangle }  \\
   {\left\langle {YZ} \right\rangle }  \\
\end{array}} \right),
\end{equation}
}
\item{H one-quantum coherence subspace:
\begin{equation}
\frac{d}{{dt}}\left( {\begin{array}{*{20}{c}}
  {\left\langle {IX} \right\rangle }  \\
   {\left\langle {IY} \right\rangle }  \\
   {\left\langle {ZX} \right\rangle }  \\
   {\left\langle {ZY} \right\rangle }  \\
\end{array}} \right) = \left[ {\begin{array}{*{20}{c}}
   r_{10} & 0 & r_{12} & -\pi J  \\
   0 & r_{10} & \pi J & r_{12}  \\
   r_{12} & -\pi J & r_{11} & 0  \\
   \pi J & {{r_{12}}} & 0 & r_{11}  \\
\end{array}} \right]\left( {\begin{array}{*{20}{c}}
  {\left\langle {IX} \right\rangle }  \\
   {\left\langle {IY} \right\rangle }  \\
   {\left\langle {ZX} \right\rangle }  \\
   {\left\langle {ZY} \right\rangle }  \\
\end{array}} \right),
\end{equation}
}
\item{Subspace of zero- and double- quantum coherences:
\begin{equation}
\frac{d}{{dt}}\left( {\begin{array}{*{20}{c}}
   {\left\langle {XY} \right\rangle }  \\
   {\left\langle {YX} \right\rangle }  \\
   {\left\langle {XX} \right\rangle }  \\
   {\left\langle {YY} \right\rangle }  \\
\end{array}} \right) = \left[ {\begin{array}{*{20}{c}}
  r_{13} & -r_{14} & 0 & 0  \\
   -r_{14} & r_{13} & 0 & 0  \\
   0 & 0 & r_{13} & r_{14}  \\
   0 & 0 & r_{14} & r_{13}  \\
\end{array}} \right]\left( {\begin{array}{*{20}{c}}
   {\left\langle {XY} \right\rangle }  \\
   {\left\langle {YX} \right\rangle }  \\
   {\left\langle {XX} \right\rangle }  \\
   {\left\langle {YY} \right\rangle }  \\
\end{array}} \right),
\end{equation}
}
\end{itemize}
where $\left\{r_k \right\}_{k=1,...,14}$ are relaxation rates including auto-relaxation rates and cross-relaxation rates. To estimate the relaxation rates, we first sample the system evolution trajectory (starting from a known initial state $\rho (0)$), then find values of the relaxation rates so that the simulated dynamics can match the observed trajectory. The fitting results are listed below (we have set $\epsilon = 1$)
\begin{itemize}
\item{$\left\{ r_1, r_2, r_3, r_4, r_5, r_6 \right \} \approx \left\{ 0.0532, 0.0918, 0.0798, 0.0212, 0.0000, 0.0022 \right \}$
\begin{figure}[h]
\centering
\includegraphics[width=0.9\linewidth]{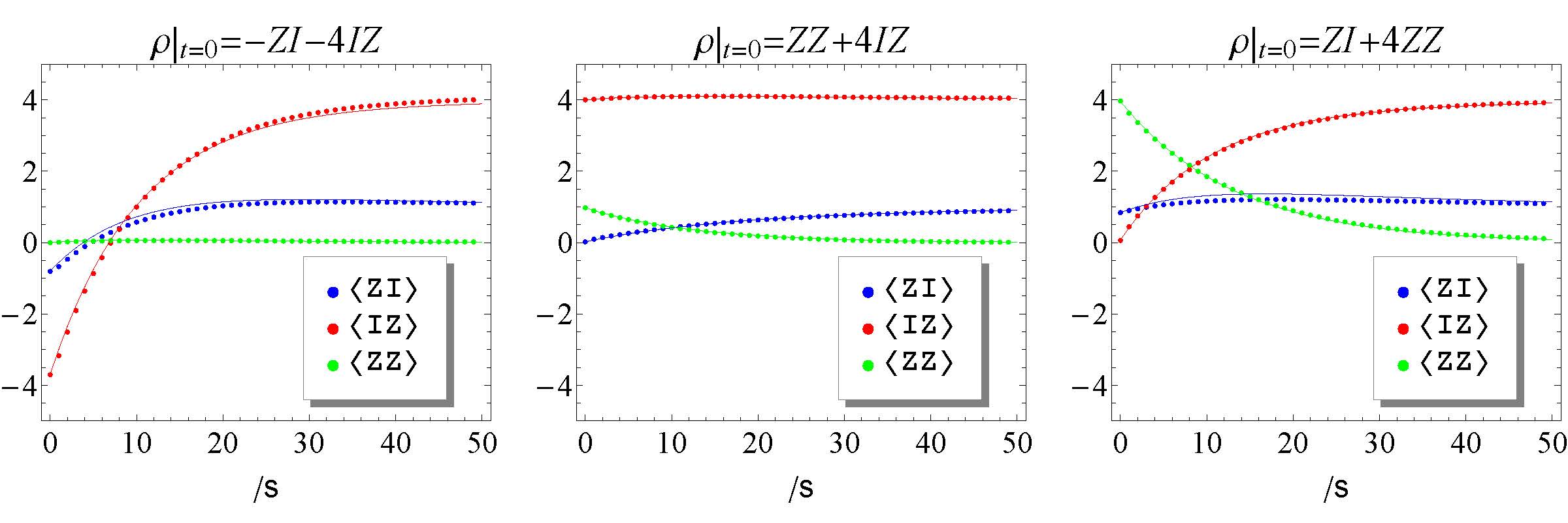}
\end{figure}
}
\end{itemize}
\newpage
\begin{itemize}
\item{$\left\{ r_7, r_8, r_9 \right \} \approx \left\{ 3.495, 6.536, 0.0100 \right \}$
\begin{figure}[h]
\centering
\includegraphics[width=0.8\linewidth]{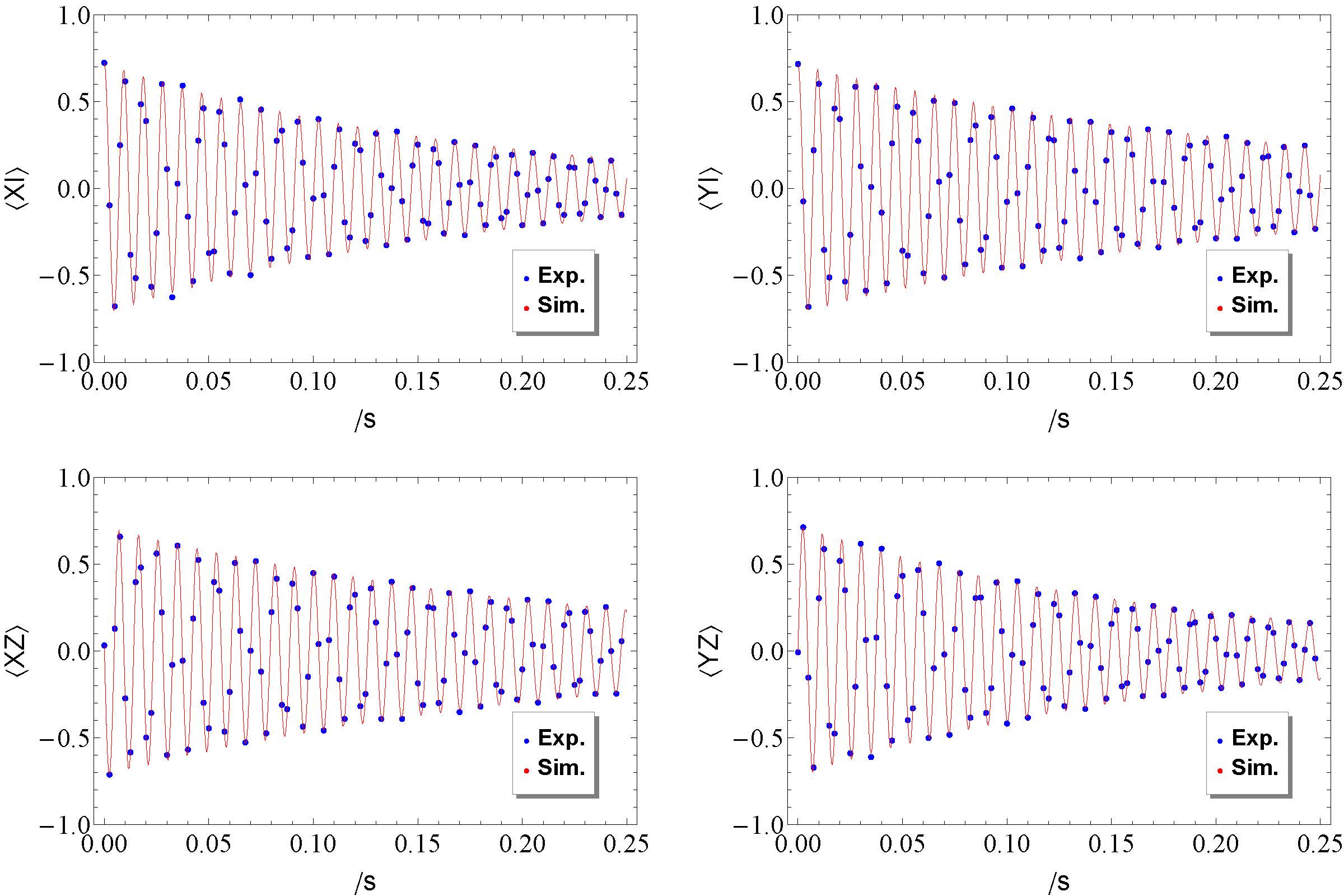}
\end{figure}
}
\end{itemize}
\begin{itemize}
\item{$\left\{ r_{10}, r_{11}, r_{12} \right \} \approx \left\{ 2.955, 6.118, 0.030 \right \}$
\begin{figure}[h]
\centering
\includegraphics[width=0.8\linewidth]{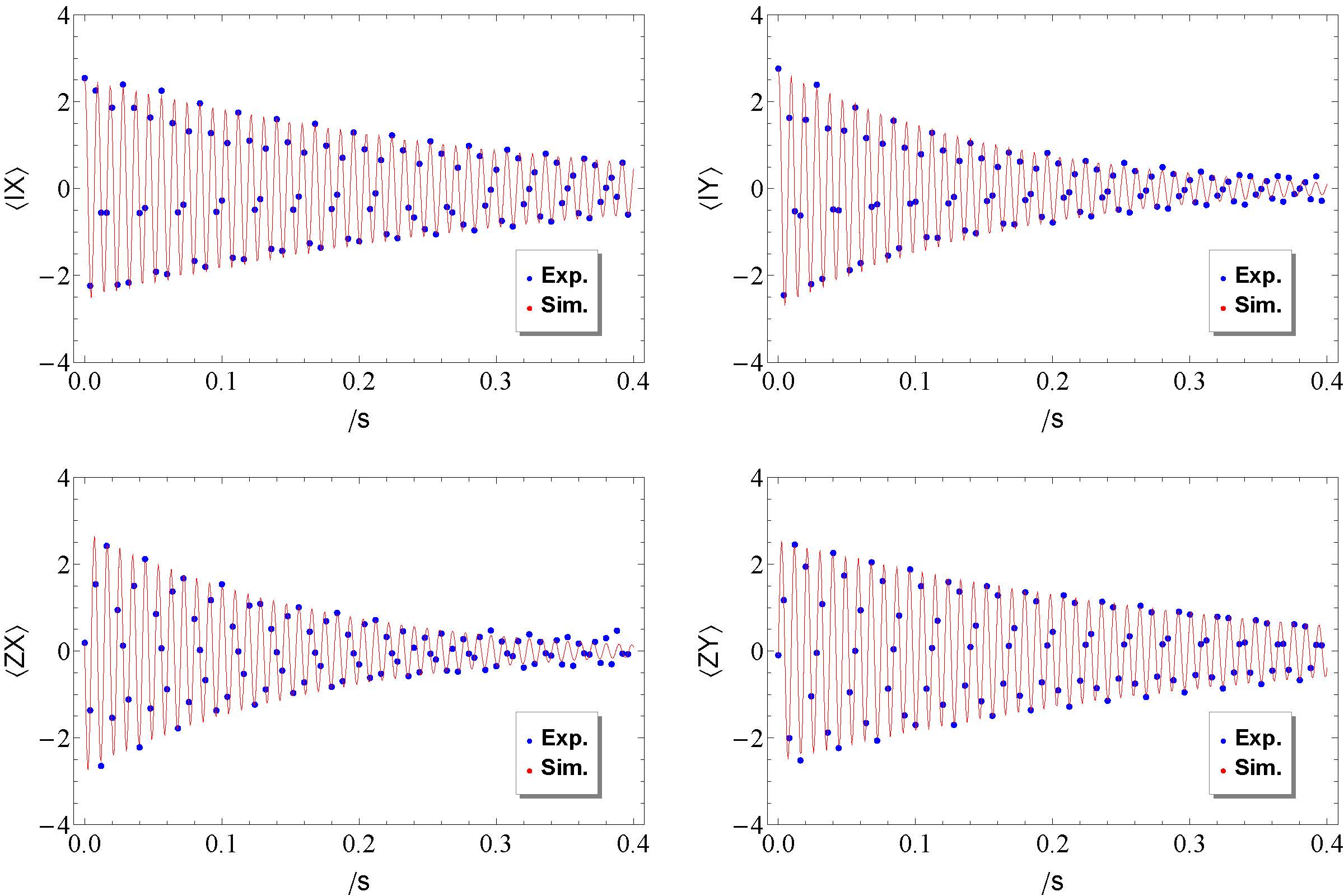}
\end{figure}
}
\end{itemize}
\begin{itemize}
\item{$\left\{ r_{13}, r_{14} \right \} \approx \left\{ 9.523, 0.008 \right \}$
\begin{figure}[h]
\centering
\includegraphics[width=0.75\linewidth]{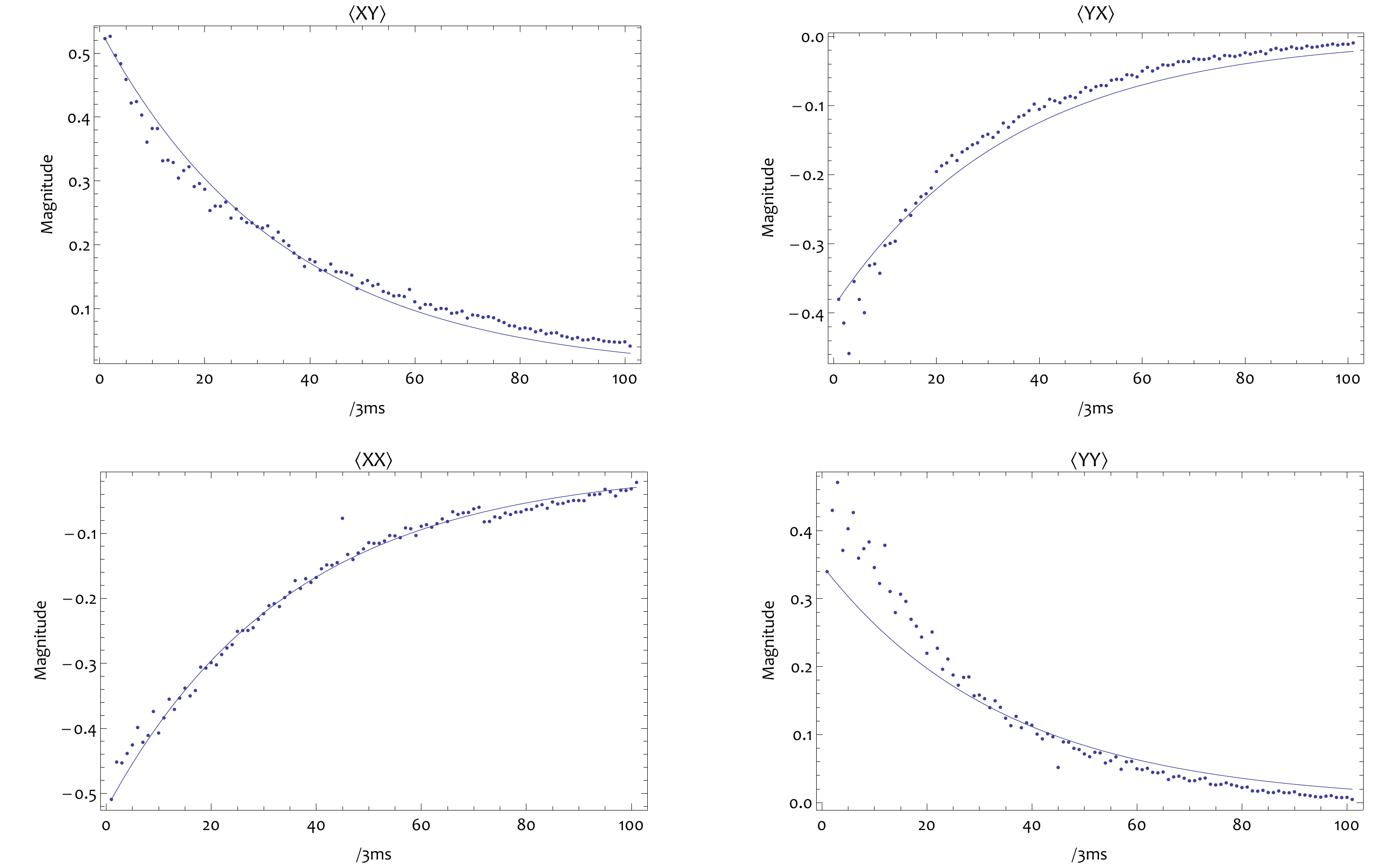}
\end{figure}
}
\end{itemize}

\newpage

\section{Robustness of Periodic Control Method for PPS Preparation}
Figure 2 shows the relative error of the prepared PPS due to imperfections of control fields present in the $^{13}$C channel ($\delta_{\text{C}} = \left| {B_{real}^{\text{C}} - B_{ideal}^{\text{C}}} \right|{\rm{/}}\left| {B_{ideal}^{\text{C}}} \right|$) and $^1$H channel ($\delta_{\text{H}} = \left| {B_{real}^{\text{H}} - B_{ideal}^{\text{H}}} \right|{\rm{/}}\left| {B_{ideal}^{\text{H}}} \right|$). The relative error is characterized by
\begin{equation}
\delta = \left\| {{\rho _{real}} - {\rho _{pps}}} \right\|/\left\| {{\rho _{pps}}} \right\|.
\end{equation}
It can be easily seen that the periodic control method is quite robust to the control imperfections.
\begin{figure}[h]
  \begin{minipage}[c]{0.6\textwidth}
    \includegraphics [width=0.8\textwidth]{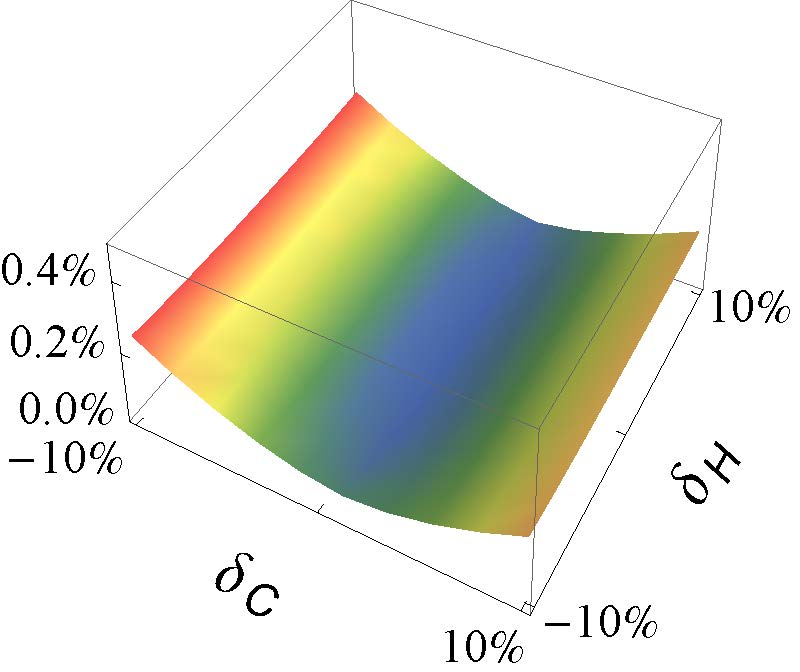}
  \end{minipage}\hfill
  \begin{minipage}[c]{0.4\textwidth}
    \caption{Simulation result: relative error of the prepared PPS due to imperfections of control fields present in the $^{13}$C channel and $^1 $H channel.}
    \label{Robustness}
  \end{minipage}
\end{figure}

\end{document}